\newtheorem{theorem}{Theorem}
\newtheorem{lemma}{Lemma}
\renewcommand\footnoterule{%
  \kern-3\p@
  \hrule\@width.4\columnwidth
  \kern2.6\p@}
\newcommand{\E}{\mathbb{E}}
\newcommand{\setC}{\mathcal{C}}
\begin{document}
%
\title{Limited by Capacity or Blockage?\\ A Millimeter Wave Blockage Analysis\vspace{-3mm}}
\author{Ish Kumar Jain, Rajeev Kumar, and Shivendra Panwar\\
Department of Electrical and Computer Engineering, Tandon School of Engineering, NYU, NY 11201, USA\\
}
\maketitle

\begin{abstract}

Millimeter wave (mmWave) communication systems can provide high data rates but the system performance may degrade significantly due to mobile blockers and the user's own body.
A high frequency of interruptions and long duration of blockage may degrade the quality of experience. For example, delays of more than about 10ms cause nausea to VR viewers.
Macro-diversity of base stations (BSs) has been considered a promising solution where the user equipment (UE) can handover to other available BSs, if the current serving BS gets blocked.
However, an analytical model for the frequency and duration of dynamic blockage events in this setting is largely unknown.
In this paper, we consider an open park-like scenario and obtain closed-form expressions for the blockage probability, expected frequency and duration of blockage events using stochastic geometry. Our results indicate that the minimum density of BS that is required to satisfy the Quality of Service (QoS) requirements of AR/VR and other low latency applications is largely driven by blockage events rather than capacity requirements. Placing the BS at a greater height reduces the likelihood of blockage. We present a closed-form expression for the BS density-height trade-off that can be used for network planning.
\end{abstract}
\begin{IEEEkeywords}
Blockage, 5G, mmWave, stochastic geometry, augmented reality, virtual reality, network planning.
\end{IEEEkeywords}
\IEEEpeerreviewmaketitle

\section{Introduction}
\label{sec:intro}
Millimeter wave (mmWave) communication systems can provide high data rates of the order of a few Gbps~\cite{CellularCap-Rap}, suitable for the  Quality of Service (QoS) requirements for Augmented Reality (AR) and Virtual Reality (VR). For these applications, the user-equipment (UE) requires the data rate to be in the range of 100 Mbps to a few Gbps, and an end-to-end latency in the range of 1 ms to 10 ms~\cite{ATTARVR}. However, mmWave communication systems are quite vulnerable to blockages due to higher penetration losses and reduced diffraction~\cite{bai2015coverage}. Even the human body can reduce the signal strength by 20 dB~\cite{georgeFading}. Thus, an unblocked Line of Sight (LOS) link is highly desirable for mmWave systems. Furthermore, a mobile human blocker can block the LOS path between User Equipment (UE) and Base Station (BS) for approximately 500 ms~\cite{georgeFading}. The frequent blockages of mmWave LOS links and a high blockage duration can be detrimental to ultra-reliable and ultra-low latency applications.

One potential solution to blockages in the mmWave cellular network can be macro-diversity of BSs and coordinated multipoint (CoMP) techniques. These techniques have shown a significant reduction in interference and improvement in reliability, coverage, and capacity in the current Long Term Evolution-Advance (LTE-A) deployments and other communication networks~\cite{kim2011analysis}. Furthermore, Radio Access Networks (RANs) are moving towards the cloud-RAN architecture that implements macro-diversity and CoMP techniques by pooling a large number of BSs in a single centralized baseband unit (BBU)~\cite{IBMCloudRAN,chen2011c}. As a single centralized BBU handles multiple BSs, the handover and beam-steering time can be reduced significantly~\cite{cloudRAN}.  In order to provide seamless connectivity for ultra-reliable and ultra-low latency applications, the proposed 5G mmWave cellular architecture needs to consider key QoS parameters such as the probability of blockage events, the frequency of blockages, and the blockage duration. 
For instance, to satisfy the QoS requirement of mission-critical applications such as AR/VR, tactile Internet, and eHealth applications, 5G cellular networks target a service reliability of 99.999\% \cite{mohr20165g}.  
In general, the service interruption due to blockage events can be alleviated by caching the downlink contents at the BSs or the network edge~\cite{bastug2017toward}. However, caching the content more than about 10ms may degrade the user experience and may cause nausea to the users particularly for AR applications~\cite{westphal2017challenges}. 
An alternative when blocked is to offload traffic to sub-6GHz networks such as 4G, but this needs to be carefully engineered so as to not overload them.
Therefore, it is important to study the blockage probability, blockage frequency, and blockage duration to satisfy the desired QoS requirements. 


This work presents a simple blockage model for the LOS link using tools from stochastic geometry. In particular, our contributions are as follow:
\begin{enumerate}
\item We provide an analytical model for dynamic blockage (UE blocked by mobile blockers) and self-blockage (UE blocked by the user's own body). The expression for the rate of blockage of LOS link is evaluated as a function of the blocker density, velocity, height and link length.

\item We evaluate the closed-form probability and expected frequency of simultaneous blockage of all BSs in the range of the UE. Further, we present an approximation for the expected duration of simultaneous blockage.  
\item We verify our analytical results through Monte-Carlo simulations by considering a random way-point mobility model for blockers.
\item Finally, we present a case study to find the minimum required BS density for specific mission-critical services and analyze the trade-off between BS height and density to satisfy the QoS requirements.
\end{enumerate}

The rest of the paper is organized as follows. The related work is presented in Section \ref{sec:related-work}, and the system model is described in Section \ref{sec:system-model}. Section~\ref{sec:analysis} provides an analysis of blockage events and evaluates the key blockage metrics. The validation of our theoretical results with MATLAB simulations is presented in Section~\ref{sec:numResults}. 
Finally, Section \ref{sec:conclusion} concludes the paper.

\section{Related Work}
\label{sec:related-work}



A mmWave link may have three kinds of blockages, 
%
namely, static, dynamic, and self-blockage. Static blockage due to buildings and permanent structures has been studied in~\cite{bai2012using} and~\cite{bai2014analysis} using random shape theory and a stochastic geometry approach for urban microwave systems. The underlying static blockage model is incorporated into the cellular system coverage and rate analysis in\cite{bai2015coverage}. 
Static blockage may cause permanent blockage of the LOS link. However, for an open area such as a public park, static blockages play a small role.                                      
The second type of blockage is dynamic blockage due to mobile humans and vehicles (collectively called mobile blockers) which may cause frequent interruptions to the LOS link. Dynamic blockage has been given significant importance by 3GPP in TR 38.901 of Release 14~\cite{3gpptr}. An analytical model in~\cite{gapeyenko2016analysis} considers a single access point, a stationary user, and blockers located randomly in an area. The model in~\cite{wang2017blockage} is developed for a specific scenario of a road intersection using a Manhattan Poisson point process model.
MacCartney \textit{et al.}~\cite{georgeFading} developed a Markov model for blockage events based on measurements on a single BS-UE link. Similarly, Raghavan \textit{et al.}~\cite{raghavan2018statistical} fits the blockage measurements with various statistical models. However, a model based on experimental analysis is very specific to the measurement scenario. The authors in~\cite{han20173d} considered a 3D blockage model and analyzed the blocker arrival probability for a single BS-UE pair. 
Studies of spatial correlation and temporal variation in blockage events for a single BS-UE link are presented in~\cite{samuylov2016characterizing} and~\cite{gapeyenko2017temporal}. However, their analytical model is not easily scalable to multiple BSs, important when considering the impact of macro-diversity..

Apart from static and dynamic blockage, self-blockage plays a key role in mmWave performance.
The authors of~\cite{abouelseoud2013effect} studied human body blockage through simulation. A statistical self-blockage model is developed in~\cite{raghavan2018statistical} through experiments considering various modes (landscape or portrait) of hand-held devices. The impact of self-blockage on received signal strength is studied in~\cite{bai2014analysis} through a stochastic geometry model. They assume the self-blockage due to a user's body blocks the BSs in an area represented by a cone.

All the above blockage models consider the UE's association with a single BS. 
Macro-diversity of BSs is considered as a potential solution to alleviate the effect of blockage events in a cellular network.
The authors of~\cite{zhu2009leveraging} and~\cite{zhang2012improving} proposed an architecture for macro-diversity with multiple BSs and showed the  improvement in network throughput. A blockage model with macro-diversity is developed in~\cite{choi2014macro} for independent blocking and in~\cite{gupta2018macrodiversity} for correlated blocking. However, they consider only static blockage due to buildings.

The primary purpose of the blockage models in previous papers was to study the 
coverage and capacity analysis of the mmWave system. However, 
apart from the signal degradation, blockage frequency and duration also affects the performance of the mmWave system and are critically important for applications such as AR/VR.   
In this paper, we present a simple closed-form expression for a compact analysis to provide insight into the optimal density, height and other design parameters and trade-offs of BS deployment.

\section{System Model}
\label{sec:system-model}
Our system model consists of the following settings:
\begin{itemize}
\item \textit{BS Model}: The mmWave BS locations are modeled as a homogeneous Poisson Point Process (PPP) with density $\lambda_T$. 
Consider a disc $B(o,R)$ of radius $R$ and centered around the origin $o$, where a typical UE is located.
We assume that each BS in $B(o,R)$ is a potential serving BS for the UE. 
Thus, the number of BSs $M$ in the disc $B(o,R)$ of area $\pi R^2$ follows a Poisson distribution with parameter $\lambda_T\pi R^2$, \textit{i.e.},
\begin{equation}\label{eqn:poisson}
    P_M(m) = \frac{[\lambda_{T} \pi R^2]^m}{m!}e^{-\lambda_{T} \pi R^2}.
\end{equation}  


Given the number of BSs $m$ in the disc $B(o,R)$, we have a uniform probability distribution for BS locations.
The BSs distances $\{R_i\}$ $\forall i=1,\ldots,m$ from the UE are independent and identically distributed (iid) with distribution
\begin{equation}\label{eqn:distribution}
 f_{R_i|M}(r|m) = \frac{2r}{R^2}; \ 0< r\le R, \forall\ i=1,\ldots,m.
\end{equation}
\begin{figure}[!t]
	\centering
	\includegraphics[width=0.48\textwidth]{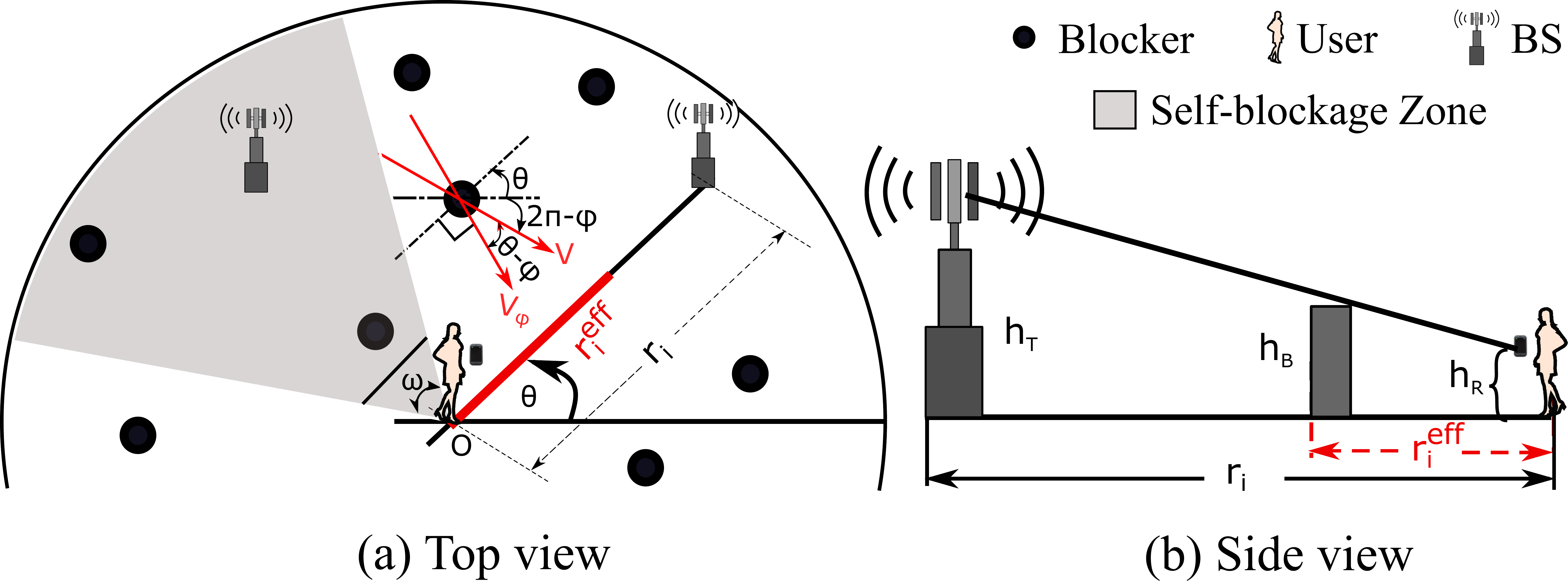}
	\caption{System Model}
	\label{fig:sysMod}
	\vspace{-4mm}
\end{figure}


\item \textit{Self-blockage Model}: The user blocks a fraction of BSs due to his/her own body. The self-blockage zone is defined as a sector of the disc $B(o,R)$ making an angle $\omega$ towards the user's body as shown in Figure~\ref{fig:sysMod}(a). Thus, all of the BSs in the self-blockage zone are considered blocked. 

\item \textit{Dynamic Blockage Model}: 
The blockers are distributed according to a homogeneous PPP with parameter $\lambda_{B}$. Further, the arrival process of the blockers crossing the $i^{th}$ BS-UE link is Poisson with intensity $\alpha_i$. The blockage duration is independent of the blocker arrival process and is exponentially distributed with parameter $\mu$. 



\item \textit{Connectivity Model}: We say the UE is blocked when all of the potential serving BSs in the disc $B(o,R)$ are blocked simultaneously.

\end{itemize}


For a sound understanding of the system model, consider a single BS-UE LOS link in Figure~\ref{fig:sysMod}(a). The distance between the $i^{th}$ BS and the UE is $r_i$ and the LOS link makes an angle $\theta$ with respect to the positive x-axis. Further, the blockers in the region move with constant velocity $V$ at an angle $\varphi$ with the positive x-axis, where $\varphi\sim \text{Unif}([0,2\pi])$. 
Note that only a fraction of blockers crossing the BS-UE link will be blocking the LOS path, as shown in Figure \ref{fig:sysMod}(b). The effective link length $r_i^{eff}$ that is affected by the blocker's movement is 
\begin{equation}
	r_i^{eff}=\frac{\left(h_B-h_R\right)}{\left(h_T-h_R\right)}r_i,
\end{equation}
where $h_B, h_R$, and $h_T$ are the heights of blocker, UE (receiver), and BS (transmitter) respectively. The blocker arrival rate $\alpha_i$ (also called the blockage rate) is evaluated in Lemma \ref{lemma:alphan}. 

\begin{lemma}\label{lemma:alphan}
 The blockage rate $\alpha_i$ of the $i^{th}$ BS-UE link is
\begin{equation}\label{eqn:SingleBS}
\alpha_{i}
=Cr_i,
\end{equation}
where $C$ is proportional to blocker density $\lambda_B$ as,
\begin{equation}\label{eqn:C}
C = \frac{2}{\pi}\lambda_{B} V\frac{(h_B-h_R)}{(h_T-h_R)}.
\end{equation} 

\end{lemma}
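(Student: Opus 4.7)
The plan is to treat the blockage rate as a classical line-crossing rate in a moving Poisson field of blockers, after first reducing the 3D geometry to a 2D problem. Since a blocker of height $h_B$ only obstructs the portion of the LOS link that lies below the line-of-sight at blocker height, I would first justify the scaling $r_i^{eff}=\frac{h_B-h_R}{h_T-h_R}r_i$: by similar triangles, a blocker's head intersects the BS-UE link segment only while the blocker's projected position lies in a line segment of this length on the ground. After this reduction, I only need the rate at which ground-level Poisson blockers moving with velocity $V$ and uniformly random directions cross a fixed segment of length $L=r_i^{eff}$.

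Next I would condition on the blocker direction $\varphi\sim\text{Unif}[0,2\pi]$ and compute the crossing rate for a deterministic direction. A standard kinematic argument does this: in a frame where the blockers are stationary, the segment sweeps an area per unit time equal to $V L |\sin(\varphi-\theta)|$, where $\theta$ is the orientation of the BS-UE link. By the displacement theorem for a homogeneous PPP and Campbell's formula, the number of blockers entering this swept region per unit time is Poisson with intensity
\begin{equation}
\alpha_i(\varphi)=\lambda_B V L\,|\sin(\varphi-\theta)|.
\end{equation}

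I would then de-condition on $\varphi$ by integrating over the uniform density $\frac{1}{2\pi}$. The integral
\begin{equation}
\frac{1}{2\pi}\int_0^{2\pi}|\sin(\varphi-\theta)|\,d\varphi=\frac{2}{\pi}
\end{equation}
is independent of $\theta$, which is the crucial simplification. Substituting $L=r_i^{eff}=\frac{h_B-h_R}{h_T-h_R}r_i$ then gives
\begin{equation}
\alpha_i=\frac{2}{\pi}\lambda_B V\frac{h_B-h_R}{h_T-h_R}\,r_i=Cr_i,
\end{equation}
with $C$ as claimed.

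The main obstacle, in my view, is not the averaging step but the rigorous justification that the ``arrival'' process of blockers crossing the link is itself Poisson with this rate, given a PPP of blockers with independent uniform directions. I would appeal to the marking theorem (mark each blocker by its direction $\varphi$ and velocity vector, independent of location) together with Slivnyak--Mecke, so that the crossing events at the segment form a stationary Poisson process in time with intensity equal to the expected instantaneous crossing rate computed above. The cleanest presentation may simply cite this as the standard line-crossing result for a moving PPP, since the paper's focus is on the consequences rather than the measure-theoretic underpinnings.
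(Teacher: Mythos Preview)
Your proposal is correct and follows essentially the same kinematic argument as the paper: compute the swept-area crossing rate $\lambda_B V r_i^{eff}|\sin(\varphi-\theta)|$ for a fixed direction and average over $\varphi$. The only cosmetic difference is that the paper restricts to the half-range $\theta-\pi<\varphi<\theta$ where $\sin(\theta-\varphi)>0$ and doubles to account for the other side of the link, whereas you integrate $|\sin(\varphi-\theta)|$ over the full range; your added remarks on the marking theorem and Slivnyak--Mecke simply supply rigor that the paper outsources to a citation.
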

\begin{proof}
Consider a blocker moving at an angle $(\theta-\varphi)$ relative to the BS-UE link (See Figure \ref{fig:sysMod}(a)). The component of the blocker’s velocity perpendicular to the BS-UE link is $V_\varphi = V\sin(\theta-\varphi)$, where $V_\varphi$ is positive when $(\theta-\pi)<\varphi<\theta$. 
Next, we consider a rectangle of length $r_i^{eff}$ and width $V_\varphi\Delta t$. The blockers in this area will block the LOS link over the interval of time $\Delta t$. Note there is an equivalent area on the other side of the link. Therefore, the frequency of blockage is $2\lambda_{B}r_i^{eff}V_\varphi\Delta t = 2\lambda_{B} r_i^{eff}V\sin(\theta-\varphi)\Delta t$. Thus, the frequency of blockage per unit time is $2\lambda_{B} r_i^{eff}V\sin(\theta-\varphi)$. Taking an average over the uniform distribution of $\varphi$ (uniform over $[0,2\pi]$), we get the blockage rate $\alpha_i$ as
\begin{equation}\label{eqn:SingleBS}
\begin{split}
\MoveEqLeft\alpha_{i} = 2\lambda_{B} r_i^{eff}V\int_{\varphi = \theta-\pi}^{\theta}\sin(\theta-\varphi)\frac{1}{2\pi}\,d\phi \\
\MoveEqLeft \qquad \quad =\frac{2}{\pi}\lambda_{B} r_i^{eff}V = \frac{2}{\pi}\lambda_{B} V\frac{(h_B-h_R)}{(h_T-h_R)}r_i.
\end{split}
\end{equation}
This concludes the proof. 
\end{proof}
Following~\cite{gapeyenko2017temporal}, we model the blocker arrival process as Poisson with parameter $\alpha_i$ blockers/sec (bl/s).
Note that there can be more than one blocker simultaneously blocking the LOS link. The overall blocking process has been modeled in~\cite{gapeyenko2017temporal} as an alternating renewal process with alternate periods of blocked/unblocked intervals, where the distribution of the blocked interval is obtained as the busy period distribution of a general $M/GI/\infty$ system.
For mathematical simplicity, we assume the blockage duration of a single blocker is exponentially distributed with parameter $\mu$, thus, forming an $M/M/\infty$ queuing system. We further approximate the overall blockage process as an alternating renewal process with exponentially distributed periods of blocked and unblocked intervals with parameters $\alpha_i$ and $\mu$ respectively. This approximation works for a wide range of blocker densities as shown in Section \ref{sec:numResults}.



In the next section, we consider a generalized blockage model for $M$ BSs which are in the range of the UE. The UE keeps track of all these $M$ BSs using beam-tracking and handover techniques. 
Since the handover process is assumed to be very fast, we assume that the UE can instantaneously connect to any unblocked BS when the current serving BS gets blocked. Therefore, we consider the total blockage event occurs only when all the potential serving BSs are blocked. 

\section{Analysis of Blockage Events}\label{sec:analysis}

\subsection{Coverage Probability under Self-blockage}
Let there be $N$ BSs out of total $M$ BSs within the range of the UE that are not blocked by self-blockage. 
The distribution of the number of BSs ($N$) outside the self-blockage zone and in the disc $B(o,R)$ is
\begin{equation}\label{eqn:PN}
P_N(n) = \frac{[p\lambda_{T} \pi R^2]^n}{n!}e^{-p\lambda_{T} \pi R^2}.
\end{equation}
where \begin{equation}\label{eqn:p}
p= 1-\omega/2\pi 
\end{equation} 
is the probability that a randomly chosen BS lies outside the self-blockage zone in the disc $B(o,R)$ (the area of such region would be $p\pi R^2$).
Let $\mathcal{C}$ denotes an event that the UE has at least one serving BS in the disc $B(o,R)$ and outside the self-blockage zone, \textit{i.e.}, $N\ne 0$. The probability of the event $\mathcal{C}$ is called the coverage probability under self-blockage and calculated as,
\begin{equation}\label{eqn:coveragep}
P(\mathcal{C}) =1-P_N(0)=1- e^{-p\lambda_T\pi R^2}.
\end{equation}

\subsection{Blockage Probability}
Given that there are $n$ BSs in the communication range of the UE and are not blocked by the user's body, they can still get blocked by mobile blockers. The blocking event of these $n$ BS-UE links is assumed to be independent on-off processes with $\alpha_i$ and $\mu$ as blocking and unblocking rates, respectively. The probability of blockage of the $i^{th}$ BS-UE link is $\alpha_i/(\alpha_i + \mu)$. Our objective is to develop a blockage model for the mmWave cellular system where UE can connect to any of the potential serving BSs. In this setting the blockage occurs when all the BS-UE links are blocked simultaneously. 
We define an indicator random variable $B$ that indicates the blockage of all available BSs in the range of UE. The blockage probability $P(B|N,\{R_i\})$ is conditioned on the number of BSs $N$ in the disc $B(o,R)$ which are not blocked by the user's body and the link lengths $\{R_i\}$ for $i=1,\cdots,n$.
Since we assume independent blockage of BS-UE links, $P(B|N,\{R_i\})$ can be written as the product of individual blockage probabilities as
\begin{equation}\label{eqn:AllBS}
        P(B|N,\{R_i\})= \prod_{i=1}^n\frac{\alpha_i/\mu}{1+\alpha_i/\mu}   =\prod_{i=1}^n\frac{(C/\mu)r_i}{1+(C/\mu)r_i},
\end{equation} 
where $C$ is defined in (\ref{eqn:C}).
Note that the notation $P(B|N,\{R_i\})$ is a compressed version of $P_{B|N,\{R_i\}}(b|n,\{r_i\})$, which we use for convenience.

To obtain the unconditional blockage probability $P(B)$, we first evaluate the conditional blockage probability $P(B|N)$ by taking the average of $P(B|N,\{R_i\})$ over the distribution of $\{R_i\}$ and then find $P(B)$ by taking the average of $P(B|N)$ over the distribution of $N$ as follow
\begin{equation}\label{eqn:pBgivenN}
\begin{split}
P(B|N)=\!\!\int\!\!\!\int_{r_i}\!\! P(B|N,\{R_i\})\;f(\{R_i\}|N)\; dr_1\cdots dr_n,
 \end{split}
\end{equation}
\begin{equation}\label{eqn:ep1n}
\begin{split}
   P(B)=\sum_{n=0}^{\infty} P(B|N)P_N(n),
\end{split}
\end{equation}

\noindent
\begin{theorem} \label{th1} The marginal blockage probability and the conditional blockage probability conditioned on the coverage event (\ref{eqn:coveragep}) is 

\begin{equation}\label{eqn:expblockage}
        P(B) = e^{-a p\lambda_T\pi R^2},
\end{equation} 
\begin{equation}\label{eqn:expblockagecond}
       P(B|\mathcal{C}) = \frac{e^{-a p\lambda_T\pi R^2} - e^{-p\lambda_T\pi R^2}}{1-e^{-p\lambda_T\pi R^2}},
\end{equation} 
where,
\begin{equation}\label{eqn:a}
\begin{split}
      a=\frac{2\mu}{RC}-\frac{2\mu^2}{R^2C^2} \log\left(1+\frac{RC}{\mu}\right).
   \end{split}
\end{equation} 
Note that $C$ is proportional to blocker density $\lambda_B$ as shown in (\ref{eqn:C}). Also, $p=1-\omega/2\pi$ as defined in (\ref{eqn:p}). 
\end{theorem}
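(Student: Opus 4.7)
The plan is to unwind the two conditionings in equations~(\ref{eqn:pBgivenN}) and~(\ref{eqn:ep1n}) in order, and then read off the second claim by a short inclusion-exclusion argument. Because the $R_i$ are iid with density $2r/R^2$ on $(0,R]$ and the conditional blockage probability in~(\ref{eqn:AllBS}) factors over $i$, the inner integral in~(\ref{eqn:pBgivenN}) collapses to the $n$-th power of a single-variable integral. Concretely, I would define
\begin{equation*}
q \;=\; \int_0^R \frac{(C/\mu)\,r}{1+(C/\mu)\,r}\cdot\frac{2r}{R^2}\,dr,
\end{equation*}
so that $P(B\mid N=n)=q^n$.

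Next I would evaluate $q$ in closed form. The natural move is to substitute $\beta=C/\mu$ and use polynomial division, writing $r^2/(1+\beta r)=r/\beta-1/\beta^2+1/[\beta^2(1+\beta r)]$, which integrates term-by-term and produces a logarithm. After tidying, the arithmetic gives $q=1-a$ with $a$ exactly as in~(\ref{eqn:a}); this is the one computational step that has to be done carefully, and it is really the only non-trivial calculation in the theorem.

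With $P(B\mid N=n)=(1-a)^n$ in hand, the outer sum in~(\ref{eqn:ep1n}) is a Poisson probability-generating function. Using $N\sim\mathrm{Poisson}(p\lambda_T\pi R^2)$ from~(\ref{eqn:PN}),
\begin{equation*}
P(B)=\sum_{n=0}^{\infty}(1-a)^n\,\frac{(p\lambda_T\pi R^2)^n}{n!}e^{-p\lambda_T\pi R^2}=e^{-a\,p\lambda_T\pi R^2},
\end{equation*}
which is~(\ref{eqn:expblockage}).

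For the conditional form~(\ref{eqn:expblockagecond}), I would observe that the complement of the coverage event $\mathcal{C}$ is $\{N=0\}$, on which the empty product in~(\ref{eqn:AllBS}) gives $P(B\mid N=0)=1$. Hence $P(B\cap\mathcal{C}^{c})=P(N=0)=e^{-p\lambda_T\pi R^2}$, so
\begin{equation*}
P(B\mid\mathcal{C})=\frac{P(B)-P(N=0)}{P(\mathcal{C})}=\frac{e^{-a\,p\lambda_T\pi R^2}-e^{-p\lambda_T\pi R^2}}{1-e^{-p\lambda_T\pi R^2}},
\end{equation*}
using~(\ref{eqn:coveragep}) for the denominator. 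The main obstacle, as noted, is the closed-form evaluation of $q$; the rest is bookkeeping with the Poisson distribution.
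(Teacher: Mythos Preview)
Your proposal is correct and follows essentially the same route as the paper: factor the conditional blockage probability into an $n$-th power of a single integral, evaluate that integral as $1-a$ via the partial-fraction decomposition, apply the Poisson probability-generating function to obtain $P(B)$, and then recover $P(B\mid\mathcal{C})$ by removing the $n=0$ term. Your inclusion-exclusion phrasing for the last step is precisely the paper's $\sum_{n\ge 1}=\sum_{n\ge 0}-[\,n=0\,]$ argument, just worded differently.
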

\begin{proof} See Appendix \ref{app:th_prob}.
\end{proof}

For further insight, we can approximate $a$ by taking a series expansion of $\log(1+RC/\mu)$, \textit{i.e.},
\begin{equation}
\begin{split}
a&= \frac{2\mu}{RC}-\frac{2\mu^2}{R^2C^2} \left(\frac{RC}{\mu}-\frac{R^2C^2}{2\mu^2}+\frac{R^3C^3}{3\mu^3}+\cdots\right),\\
&\approx 1-\frac{2RC}{3\mu}, \quad \text{when } \frac{RC}{\mu} \ll 1
\end{split}
\end{equation}
Note that for the blocker density as high as $0.1$ bl/m$^2$, and for other parameters in Table \ref{SimParams}, we have $RC/\mu = 0.35$, which shows that the approximation holds for a wide range of blocker densities.
For large BS density $\lambda_T$, the coverage probability $P(\setC)$ is approximately 1 and $P(B|\setC)\approx P(B)$. 
In order to have the blockage probability $P(B)$ less than a threshold $\bar{P}$
\begin{equation}
P(B)=e^{-ap\lambda_T\pi R^2}\leq\bar{P},
\end{equation}
the required BS density follows
\begin{equation} \label{eqn:approx_lamT}
\lambda_T\geq\frac{-\log(\bar{P})}{ap\pi R^2}\approx \frac{-\log(\bar{P})(1+\frac{2RC}{3\mu})}{p\pi R^2},
\end{equation}
where $C$ is proportional to the blocker density $\lambda_B$ in (\ref{eqn:C}). 
The result (\ref{eqn:approx_lamT}) shows that the BS density approximately scales linearly with the blocker density.




\subsection{Expected Blockage Frequency}
The total arrival rate of blockers in the state when all BSs get simultaneously blocked is same as the total departure rate from that state in equilibrium. Therefore, the frequency/rate of simultaneous blockage of all $N$ BSs is:

\begin{equation}\label{eqn:AllBSfreq}
\begin{split}
    \zeta_B = n\mu P(B|N,\{R_i\})=n\mu \prod_{i=1}^n\frac{(C/\mu)r_i}{1+(C/\mu)r_i},
\end{split}
\end{equation}

The expected rate of blockage is obtained where the expectation is taken over the joint distribution of $N$ and $\{R_i\}$.
\begin{equation}\label{eqn:blockageDurEQn1}
\E[\zeta_B|N] = \!\!\int\!\!\!\int_{r_i}\!\! \zeta_B\; f(\{R_i\}|N)\;dr_1\cdots dr_n,
\end{equation}
\begin{equation}\label{ep1n}
        \mathbb{E} \left[\zeta_B\right] = \sum_{n=0}^\infty \E[\zeta_B|N]P_N(n).
\end{equation}

\begin{theorem} \label{th2} The expected frequency of simultaneous blockage of all BSs in the disc of radius $R$ around UE is 
\begin{equation}\label{eqn:blkfrq}
        \mathbb{E}\left[\zeta_B\right] = \mu (1-a)p \lambda_T\pi R^2 e^{-ap\lambda_T\pi R^2},
\end{equation}       and the expected frequency conditioned on the coverage event (\ref{eqn:C}) is \begin{equation}\label{eqn:blkfreq_cond}
\begin{split}
\MoveEqLeft \mathbb{E}\left[\zeta_B|\setC\right] = \frac{\mu (1-a)p\lambda_T\pi R^2e^{-ap\lambda_T\pi R^2}}{{1-e^{-p\lambda_T\pi R^2}}}.
\end{split}
\end{equation}
where $a$ is defined in (\ref{eqn:a}).
\end{theorem}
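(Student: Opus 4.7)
The plan is to mirror the two-step conditioning already used for Theorem 1: first average $\zeta_B$ over the iid link lengths $\{R_i\}$ given $N=n$, then average over the Poisson count $N$. The form of $\zeta_B$ in (18) is essentially $n\mu$ times the conditional blockage probability in (9), so most of the analytic work has already been done in Theorem 1 and I would exploit this to avoid repeating the integral that defines $a$.

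First I would use the independence of the $R_i$ to factor the inner integral as
\begin{equation*}
\E[\zeta_B\mid N=n] \;=\; n\mu \prod_{i=1}^{n} \E\!\left[\frac{(C/\mu)R_i}{1+(C/\mu)R_i}\right] \;=\; n\mu\, q^{n},
\end{equation*}
where $q := \int_0^R \frac{(C/\mu)r}{1+(C/\mu)r}\,\frac{2r}{R^2}\,dr$ is exactly the single-link integral that appears in $P(B\mid N=n)=q^n$. Rather than evaluating this integral directly, I would read off its value from Theorem 1: since $P(B)=\sum_n q^n P_N(n) = e^{-(1-q)p\lambda_T\pi R^2}$ by the Poisson probability generating function, comparison with (14) forces $q = 1-a$ with $a$ as in (16). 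This shortcut is the main economy of the proof.

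Next I would compute the outer expectation against $P_N$ in (7):
\begin{equation*}
\E[\zeta_B] \;=\; \mu e^{-p\lambda_T\pi R^2} \sum_{n=0}^{\infty} n\,\frac{\bigl((1-a)p\lambda_T\pi R^2\bigr)^{n}}{n!}.
\end{equation*}
Using the standard identity $\sum_{n\ge 0} n x^n/n! = xe^x$ with $x=(1-a)p\lambda_T\pi R^2$ pulls out a prefactor $(1-a)p\lambda_T\pi R^2$ and combines the two exponentials into $e^{-ap\lambda_T\pi R^2}$, which is exactly (22). Finally, for the conditional version I would note that the $n$ prefactor in (18) kills the $n=0$ term, so $\zeta_B \equiv 0$ on $\mathcal{C}^c$; hence $\E[\zeta_B\mathbf{1}_{\mathcal{C}}] = \E[\zeta_B]$, and dividing by $P(\mathcal{C})=1-e^{-p\lambda_T\pi R^2}$ from (8) yields (23).

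There is no serious obstacle once Theorem 1 is available — the route is largely algebraic. The one place that requires care is the index manipulation when extracting the factor of $n$ from the Poisson sum: this is what produces the extra $(1-a)p\lambda_T\pi R^2$ multiplier and explains structurally why the expected blockage frequency is the blockage probability multiplied by $\mu$ times the mean number of unblocked BSs weighted by $(1-a)$, consistent with a balance-of-flux interpretation (total arrival rate into the all-blocked state equals total departure rate $n\mu$ in equilibrium).
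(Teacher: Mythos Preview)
Your proposal is correct and follows essentially the same route as the paper's proof: compute $\E[\zeta_B\mid N=n]=n\mu(1-a)^n$ by reusing the single-link integral from Theorem~1, then average against the Poisson law via the identity $\sum_{n\ge 0} n x^n/n!=xe^x$, and finally observe that the $n=0$ term vanishes so conditioning on $\mathcal{C}$ merely divides by $1-e^{-p\lambda_T\pi R^2}$. The only cosmetic difference is that the paper cites the integral evaluation in the proof of Theorem~1 directly to obtain $(1-a)^n$, whereas you recover $q=1-a$ via the Poisson probability generating function; both are equivalent shortcuts.
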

\begin{proof}See Appendix \ref{app:th_freq}.
\end{proof}


  

\subsection{Expected Blockage Duration}
Recall that the duration of the blockage of a single BS-UE link is an exponential random variable with mean $\mu$. 
Since the blockage of individual BS-UE links are independent, the duration $T_B$ of the blockage of all $n$ BSs follows an exponential distribution with mean $1/n\mu$.
We can therefore write the expected blockage duration as 
\begin{equation}\label{eqn:exp_TBgivenN}
\E[T_B|N] = \frac{1}{n\mu}
\end{equation}

\noindent
\begin{theorem}\label{th3} The expected blockage duration of simultaneous blockage of all the BSs in $B(o,R)$, conditioned on the coverage event $\setC$ in (\ref{eqn:coveragep}), is obtained as
\begin{equation}\label{eqn:blockageDurationExp}
        \mathbb{E}\left[T_B|\setC\right] = \frac{e^{-p\lambda_T\pi R^2}}{\mu\left(1-e^{-p\lambda_T\pi R^2}\right)}\text{Ei}\left[p\lambda_T\pi R^2\right].
\end{equation}        
where, $\text{E}\text{i}\left[p\lambda_T\pi R^2\right]$ 
= $\sum_{n=1}^\infty\frac{[p\lambda_T\pi R^2]^n}{nn!}$.
\end{theorem}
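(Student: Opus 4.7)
The plan is to compute $\E[T_B\mid\setC]$ directly by summing over $n\ge 1$, using the conditional expectation $\E[T_B\mid N=n]=1/(n\mu)$ already stated in (\ref{eqn:exp_TBgivenN}), and the distribution of $N$ restricted to the coverage event $\setC=\{N\ne 0\}$. The key observation is that there is nothing left to average over once $N$ is fixed: independent exponential blockage durations with the same rate $\mu$ imply that $T_B$, the time until at least one of $n$ BS-UE links unblocks, is exponential with rate $n\mu$, so its expectation depends only on $n$, not on $\{R_i\}$. This is what makes the duration analysis substantially simpler than the probability and frequency analyses in Theorems~\ref{th1}--\ref{th2}.

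First I would form the conditional pmf of $N$ given $\setC$ using (\ref{eqn:PN}) and (\ref{eqn:coveragep}):
\begin{equation*}
P(N=n\mid\setC)=\frac{P_N(n)}{1-e^{-p\lambda_T\pi R^2}},\qquad n\ge 1.
\end{equation*}
Then, by the tower property,
\begin{equation*}
\E[T_B\mid\setC]=\sum_{n=1}^\infty\E[T_B\mid N=n]\,P(N=n\mid\setC)=\frac{e^{-p\lambda_T\pi R^2}}{\mu\bigl(1-e^{-p\lambda_T\pi R^2}\bigr)}\sum_{n=1}^{\infty}\frac{[p\lambda_T\pi R^2]^{n}}{n\,n!}.
\end{equation*}
Finally, I would identify the remaining power series as the series definition of $\text{Ei}[p\lambda_T\pi R^2]$ quoted in the theorem statement, which closes the argument.

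There is no real obstacle here beyond being careful about the conditioning: the $n=0$ term is excluded because $1/(n\mu)$ is undefined and, more importantly, because $\setC$ by construction removes the outage state in which no BS is available to be unblocked. Everything else is bookkeeping; no integration over the link lengths $\{R_i\}$ is needed, unlike in the proofs of Theorems~\ref{th1} and \ref{th2}, because $\E[T_B\mid N]$ is independent of the BS distances.
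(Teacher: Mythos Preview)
Your proposal is correct and matches the paper's own proof essentially line for line: both use $\E[T_B\mid N]=1/(n\mu)$ from (\ref{eqn:exp_TBgivenN}), sum over $n\ge 1$ against $P_N(n)/P(\setC)$, and factor out $e^{-p\lambda_T\pi R^2}/[\mu(1-e^{-p\lambda_T\pi R^2})]$ to leave the series $\sum_{n\ge 1}[p\lambda_T\pi R^2]^n/(n\,n!)$. Your added remark that no averaging over $\{R_i\}$ is required is a valid clarification the paper leaves implicit.
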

\begin{proof}
 See Appendix \ref{app:th_dur}
\end{proof}

\noindent
\begin{lemma}
$\text{E}\text{i}\left[p\lambda_T\pi R^2\right]$ converges.
\end{lemma}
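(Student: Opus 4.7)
The plan is to observe that $\text{Ei}[x] = \sum_{n=1}^\infty x^n/(n\,n!)$ with $x = p\lambda_T \pi R^2$ is a power series in a nonnegative, finite real argument (since $0 \le p \le 1$ and $\lambda_T, R$ are finite positive constants of the model), so it suffices to show that this power series has infinite radius of convergence, or more simply, that it is dominated term by term by a convergent series.

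First, I would note the pointwise bound $\dfrac{x^n}{n\, n!} \le \dfrac{x^n}{n!}$ for every $n \ge 1$ and every $x \ge 0$. Summing the dominating series gives
\begin{equation}
\sum_{n=1}^{\infty} \frac{x^n}{n!} = e^{x} - 1 < \infty,
\end{equation}
so by the comparison test $\text{Ei}[x]$ converges absolutely, and in particular $\text{Ei}[p\lambda_T\pi R^2]$ converges.

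As a cross-check I would also apply the ratio test directly: with $a_n = x^n/(n\,n!)$,
\begin{equation}
\frac{a_{n+1}}{a_n} \;=\; \frac{x \, n}{(n+1)^2} \;\longrightarrow\; 0 \quad \text{as } n \to \infty,
\end{equation}
which again yields convergence for every finite $x$, confirming in particular the finiteness of the expected blockage duration expression in Theorem~\ref{th3}.

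There is no real obstacle here; the only thing to be careful about is simply stating that the argument $p\lambda_T\pi R^2$ is finite and nonnegative under the standing modeling assumptions, so that the dominating geometric/exponential comparison is legitimate. The result is essentially the classical fact that the exponential integral series converges everywhere on the real line.
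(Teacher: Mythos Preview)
Your proposal is correct, and your ``cross-check'' via the ratio test is exactly the argument the paper gives (computing $a_{n+1}/a_n = x\,n/(n+1)^2 \to 0$). Your primary comparison argument $x^n/(n\,n!)\le x^n/n!$ against $e^x-1$ is a slightly different and arguably cleaner route that the paper does not use; it has the small advantage of giving an explicit bound $\text{Ei}[x]\le e^x-1$, whereas the ratio test only yields convergence without a quantitative estimate.
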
 
\begin{proof}
 We can use Cauchy ratio test to show that the series $\sum_{n=1}^\infty\frac{[p\lambda_T\pi R^2]^n}{nn!}$ is convergent. Consider 
$L = \lim_{n\rightarrow\infty}\frac{[p\lambda_T\pi R^2]^{n+1}/((n+1)(n+1)!)}{[p\lambda_T\pi R^2]^n/(nn!)} = \lim_{n\rightarrow\infty}\frac{[p\lambda_T\pi R^2]n}{(n+1)^2}=0$. Hence, the series converges.
\end{proof}
An approximation of blockage duration can be obtained for a high BS density as follow
\begin{equation}
\E[T_B|\setC]\approx \frac{1}{\mu p\lambda_T\pi R^2}+ \frac{1}{(\mu p\lambda_T\pi R^2)^2}.
\end{equation}
This approximation is justified in Appendix \ref{app:approx_dur}.

\begin{figure}[!t]
	\centering
	\includegraphics[width=0.435\textwidth]{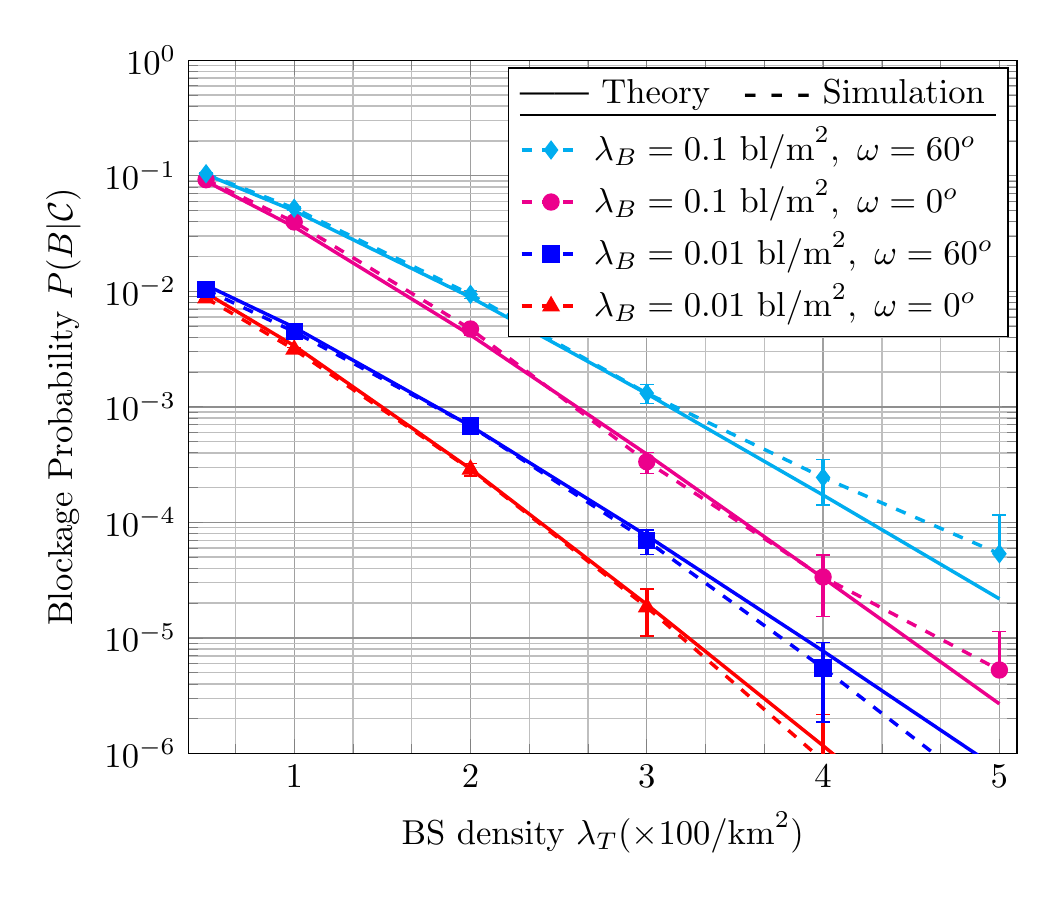}
    \captionsetup{skip=-2pt}
	\caption{Blockage Probability conditioned on coverage event vs BS density for different values of blocker density and self-blockage angle.}
	\label{fig:condBlProb}
	\vspace{-6mm}
\end{figure}
\begin{figure}[!t]
	\centering
	\includegraphics[width=0.435\textwidth]{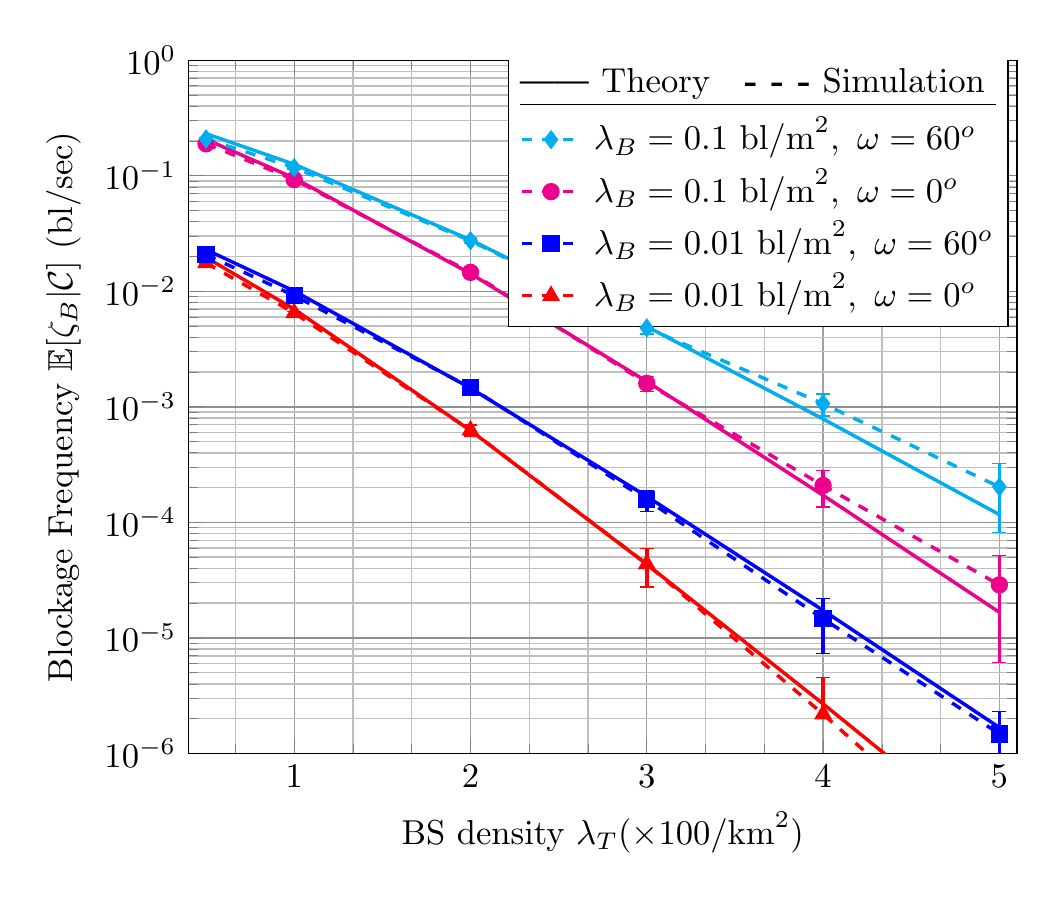}
    \captionsetup{skip=-2pt}
	\caption{Expected blockage frequency conditioned on coverage event vs BS density.}
	\label{fig:condBlfrac}
	\vspace{-5mm}
\end{figure}
\begin{figure}[!t]
	\centering
	\includegraphics[width=0.42\textwidth]{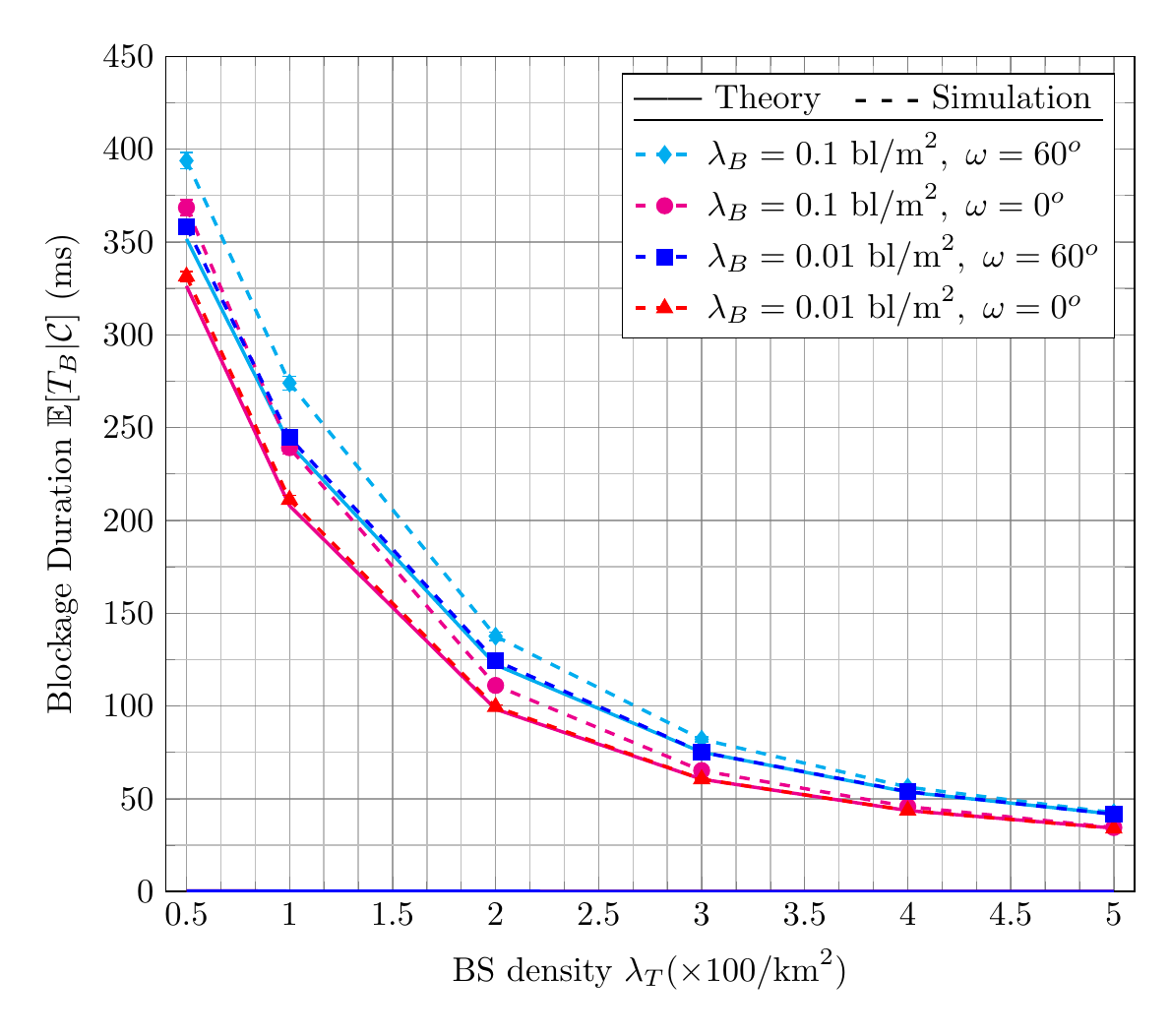}
    \captionsetup{skip=-2pt}
	\caption{Expected blockage duration conditioned on coverage event vs BS density. Note that the theoretical blockage duration is the same for different blocker densities for a fixed self-blockage angle $\omega$.}
	\label{fig:condBldur}
	\vspace{-5mm}
\end{figure}
\begin{figure}[!t]
	\centering
	\includegraphics[width=0.44\textwidth]{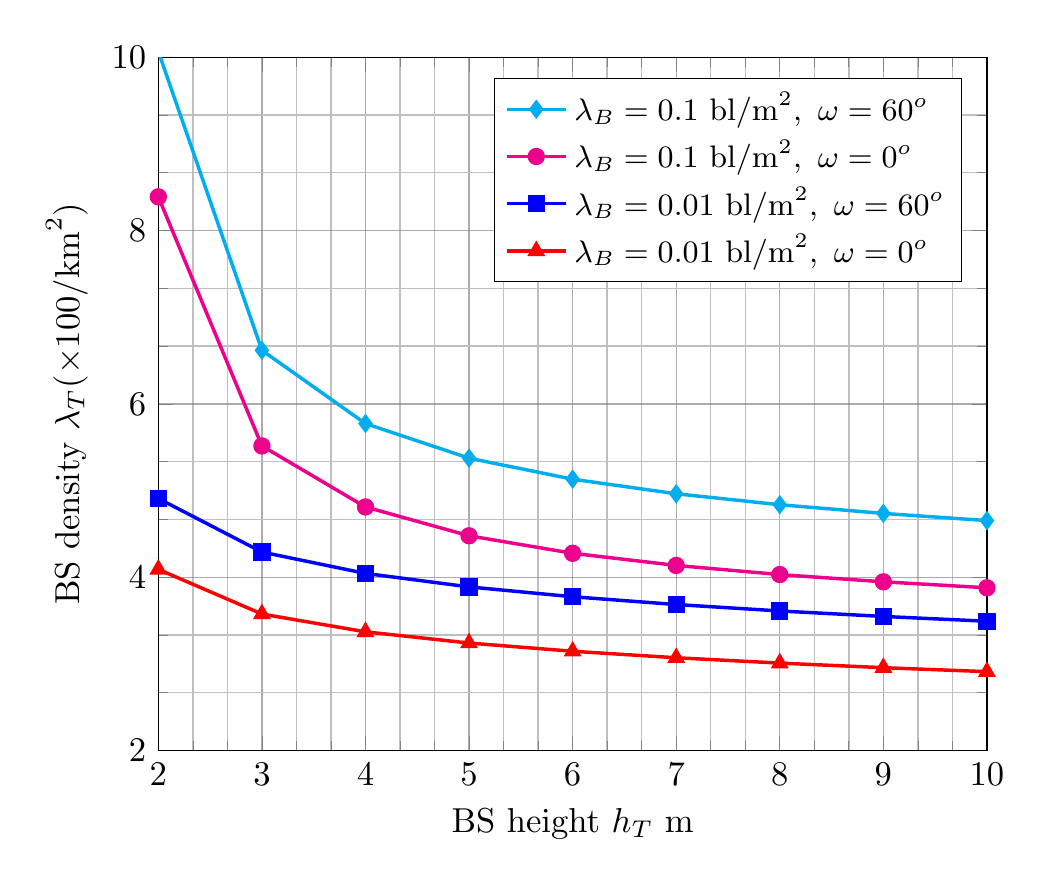}
    \captionsetup{skip=-2pt}
	\caption{The trade-off between BS height and density for fixed blockage probability $P(B|\setC)=1e-5$. }
	\label{fig:heightvsdensity}
	\vspace{-6mm}
\end{figure}

\section{Numerical Evaluation}
\label{sec:numResults}
This section compares our analytical results with MATLAB simulation\footnote{Our simulator MATLAB code is available at github.com/ishjain/mmWave.} 
where the movement of blockers is generated using the random waypoint mobility model~\cite{randomWayPoint,randomWayPointSim}. For the simulation, we consider a rectangular box of $200 \ \text{m}\times 200 \ \text{m}$ and blockers are located uniformly in this area.
Our area of interest is the disc $B(o,R)$ of radius $R=100 $m, which perfectly fits in the considered rectangular area. 
The blockers chose a direction randomly, and move in that direction for a time-duration of $t\sim \text{Unif}[0,60] \ \text{sec}$. To maintain the density of blockers in the rectangular region, we consider that once a blocker reaches the edge of the rectangle, they get reflected.

We consider two values of mobile blocker density, $0.01$ and $0.1$ bl/m$^2$, and two values of the self-blocking angle $\omega$ ($0$ and $\pi/3$) for our study. Figures~\ref{fig:condBlProb},~\ref{fig:condBlfrac}, and~\ref{fig:condBldur} present the statistics of blockages when the UE has at least one serving BS, \textit{i.e.}, the UE is always in the coverage area of at least one BS. From Figure~\ref{fig:condBlProb} and Figure~\ref{fig:condBlfrac}, we can observe that the blockage probability and the expected blockage frequency decrease exponentially with BS density. From the point of view of interactive applications such as AR/VR, video conferencing, online gaming, and others, this means that a higher BS density can potentially decrease interruptions in the data transmission. For example, for a blocker density of 0.1 bl/m$^2$, a BS density of $100$/km$^2$ can decrease the interruptions to once in ten seconds, $200$/km$^2$ can decrease them to once in 100 seconds, and $300$/km$^2$ 
decrease them to once in 1000 seconds. Reducing the frequency of interruptions is particularly crucial for AR/VR applications, therefore from this perspective a density of 200-300/km$^2$ may be required. This corresponds to about 6 to 9 BS, respectively,  within the range of each UE. From Figure~\ref{fig:condBldur}, we can observe that caching of $100$ ms worth of data is required for a BS density $200$/km$^2$ to have uninterrupted services. For AR and tactile applications, caching is not a solution and a delay of 100 ms may be an unacceptable delay. Switching to microwave networks such as 4G during blockage events may be an alternative solution instead of deploying a high  BS density, but then this may need careful network planning so as to not overload the 4G network. The amount of required cached data decreases with increasing BS density. A BS density of $300$/km$^2$ and $500$/km$^2$ can bring down the required cached data to $60$ ms and $40$ ms, respectively. This may be acceptable for AR/VR applications if these freezes are infrequent. Thus, the cellular architecture needs to consider the optimal amount of cached data and the optimal BS density needed to mitigate the effect of these occasional high blockage durations to satisfy QoS requirements for AR/VR application without creating nausea. A tentative conclusion is that perhaps a minimum acceptable density of 300/km$^2$ (which corresponds to about 9 BS within range of each UE) is needed to keep interruptions lasting about 60 ms to once every 1000 seconds.    

We also observe that both simulation and analytical results are approximately the same for a low blocker density of $0.01$ bl/m$^2$. 
From Figure~\ref{fig:condBldur}, we observe our analytical result deviates from the simulation result for lower values of BS densities. However, the percentage error ($\sim10-15\%$) is not significant. 



\begin{table}[!t]
\vspace{2mm}
\caption{Simulation parameters}
\label{SimParams}
\centering
\begin{tabular}{|c|c|}\hline
	\textbf{Parameters} & \textbf{Values} \\\hline 
    Radius  $R$ &100 m \\
    Velocity of blockers $V$&1 m/s\\
    Height of Blockers $h_B$ & 1.8 m\\
    Height of UE $h_R$ & 1.4 m\\
    Height of APs $h_T$ & 5 m\\
    Expected blockage duration $1/\mu$ &1/2 s\\
   Self-blockage angle $\omega$ & 60$^\text{o}$ \\ 
\hline
\end{tabular}
\vspace{-4mm}
\end{table}


5G-PPP has issued requirements for 5G use cases~\cite{mohr20165g} with  service reliability $\geq 99.999\%$ for specific mission-critical services.   
From Figure \ref{fig:condBlProb}, we can infer that the minimum BS density required for a maximum blockage probability $P(B| \setC) = 1e-5$ is 400 BS/km$^2$ for a blocker density of 0.01 bl/m$^2$ and self-blockage angle of $60^o$. For a higher blocker density, the required BS density increases linearly.
Note that this again imposes a higher BS density than would be necessary from most models based solely on capacity needs (roughly 100 BS/km$^2$~\cite{CellularCap-Rap}).

The BS height vs. density trade-off is shown in Figure \ref{fig:heightvsdensity}. Note, for example, that doubling the height of the BS from 4m to 8 m reduces the BS station density requirement by approximately 20\%. 
The optimal BS height and density can be obtained by performing a cost analysis.


\section{Conclusion}
\label{sec:conclusion}
In this paper, we presented a simplified model for the key QoS parameters such as blockage probability, frequency, and duration in mmWave cellular systems. Our model considered an open park-like area with dynamic blockage due to mobile blockers and self-blockage due to user's own body. We have not considered Non-Line-of-Sight (NLOS) paths for such an environment, but it will be included in future work. The user is considered blocked when all potential BSs around the UE are blocked simultaneously.
We verified our theoretical model with MATLAB simulations considering the random waypoint mobility model for blockers. Our results tentatively show that the density of BS required to provide acceptable quality of experience to AR/VR applications is much higher than that obtained by capacity requirements alone. This suggests that the mmWave cellular networks may be blockage limited instead of capacity limited. 
We also plan to study blockage effects in networks with hexagonal cells in our future work.

\bibliographystyle{IEEEtran}
\bibliography{references}

\appendix

\subsection{Proof of Theorem 1}\label{app:th_prob}
We first derive $P(B|N)$ in (\ref{eqn:pBgivenN}) as
\begin{equation}\label{eqn:proof1-firsthalf}
\begin{split}
&P(B|N)=\int\!\!\!\int_{r_i}\!\! P(B|N,\{R_i\})\;f(\{R_i\}|N)\; dr_1\cdots dr_n\\
&=\int\!\!\!\int_{r_i}\prod_{i=1}^n \frac{(C/\mu)r_i}{1+(C/\mu)r_i} \frac{2r_i}{R^2} dr_i \\
& = \prod_{i=1}^n \int_{r=0}^{R} \frac{(C/\mu)r}{1+(C/\mu)r} \frac{2r}{R^2} dr \\
&=\left(\int_{r=0}^R\frac{(C/\mu)r}{1+(C/\mu)r} \frac{2r}{R^2}\,dr\right)^n\\
&= \left(\int_{r=0}^R\left(\frac{2r}{R^2}-\frac{2\mu}{R^2C}+\frac{2\mu}{R^2C}\frac{1}{(1+Cr/\mu)}\right)\,dr\right)^n\\
&=\left(\left(\frac{r^2}{R^2}-\frac{2\mu r}{R^2C}+\frac{2\mu^2}{R^2C^2}\log(1+Cr/\mu)\right)\bigg|_0^R\right)^n\\
&=\left(1-\frac{2\mu }{RC}+\frac{2\mu^2}{R^2C^2}\log(1+RC/\mu)\right)^n\\& = (1-a)^n,
 \end{split}
\end{equation}
where $a$ is given in (\ref{eqn:a}).
Next, we evaluate $P(B)$ in (\ref{eqn:ep1n}) as
\begin{equation*}\label{eqn:probPBpart2Dyn}
\begin{split}
   &P(B)=\sum_{n=0}^{\infty} P(B|N)P_N(n)\\
   &=\sum_{n=0}^\infty (1-a)^n  \frac{[p\lambda_{T} \pi R^2]^n}{n!}e^{-p\lambda_{T} \pi R^2} \\
& = e^{-ap\lambda_{T} \pi R^2}\sum_{n=0}^\infty \frac{[(1-a)\lambda_{T} \pi R^2]^n}{n!}e^{-(1-a)\lambda_{T} \pi R^2}\\
&= e^{-ap\lambda_{T} \pi R^2}. \\
\end{split}
\end{equation*}

Finally, the conditional blockage probability $P(B|\mathcal{C})$ conditioned on coverage event is obtained as
\begin{equation}\label{eqn:pureBlk}
\begin{split}
P(B|\setC)& = \frac{P(B,\setC)}{P(\setC)} =\frac{\sum_{n=1}^{\infty} P(B|N)P_N(n)}{P(\setC)}\\
&=\frac{e^{-a p\lambda_T\pi R^2} - e^{-p\lambda_T\pi R^2}}{1-e^{-p\lambda_T\pi R^2}}.
\end{split}
\end{equation}

This concludes the proof of Theorem~\ref{th1}.



\subsection{Proof of theorem 2}\label{app:th_freq}
We prove theorem 2 here.
We first evaluate $\E[\zeta_B|N]$ given in (\ref{eqn:blockageDurEQn1}) as
\begin{equation}
\begin{split}
\E[\zeta_B|N] &= n\mu \!\!\int\!\!\!\int_{r_i} \prod_{i=1}^n \frac{(C/\mu)r_i}{1+(C/\mu)r_i} \frac{2r_i}{R^2} dr_i \\
& = n\mu (1-a)^n,
\end{split} 
\end{equation}
which follows similar to (\ref{eqn:proof1-firsthalf}). Next, we evaluate $\E[\zeta_B]$ given in (\ref{ep1n}) as
\begin{equation}
\begin{split}
 &\mathbb{E} \left[\zeta_B\right]  =  \sum_{n=0}^\infty  n \mu(1-a)^n \frac{[p\lambda_{T} \pi R^2]^n}{n!}e^{-p\lambda_{T} \pi R^2} \\
 &= \mu(1-a)p\lambda_{T} \pi R^2e^{-ap\lambda_{T} \pi R^2}\times\\
 &\qquad\quad\sum_{n=0}^\infty  \frac{[(1-a)p\lambda_{T} \pi R^2]^{(n-1)}}{(n-1)!}e^{-(1-a)p\lambda_{T} \pi R^2} \\
 &= \mu(1-a)p\lambda_{T} \pi R^2e^{-ap\lambda_{T} \pi R^2}.\\
\end{split}
\end{equation}

Finally, the expected frequency of blockage conditioned on the coverage events (\ref{eqn:coveragep}) is given by

\begin{equation}
\begin{split}
\MoveEqLeft \mathbb{E} \left[\zeta_B|\setC\right] = \frac{\sum_{n=1}^\infty \E[\zeta|N] P_N(n)}{P(\setC)} =\frac{\sum_{n=0}^\infty \E[\zeta|N] P_N(n)}{P(\setC)} \\
 &= \frac{\mu(1- a)p\lambda_T\pi R^2e^{-ap\lambda_T\pi R^2}}{{1-e^{-p\lambda_T\pi R^2}}}.
\end{split}
\end{equation} 
This concludes the proof of Theorem~\ref{th2}. 
\subsection{Proof of theorem \ref{th3}}\label{app:th_dur}
Using the results from (\ref{eqn:exp_TBgivenN}), we find the expected blockage duration $\E[T_B|\setC]$ conditioned on the coverage event $\setC$ defined in (\ref{eqn:coveragep}) as follow
\begin{equation}
\begin{split}
 \mathbb{E} \left[T_B|\setC\right] &= \frac{\sum_{n=1}^\infty \frac{1}{n\mu} P_N(n)}{P(\setC)} \\
&= \frac{\sum_{n=1}^\infty \frac{1}{n\mu} \frac{[p\lambda_{T} \pi R^2]^n}{n!}e^{-p\lambda_{T} \pi R^2}}{1-e^{-p\lambda_T\pi R^2}} \\
&= \frac{e^{-p\lambda_{T} \pi R^2}}{\mu\left(1-e^{-p\lambda_T\pi R^2}\right)}\sum_{n=1}^\infty \frac{[p\lambda_{T} \pi R^2]^n}{n n!}.
\end{split}
\end{equation}
This concludes the proof of Theorem \ref{th3}.


\subsection{Proof of approximation of expected duration}\label{app:approx_dur}
The expectation of a function $f(n) = 1/n$ can be approximated using the Taylor series as
\begin{equation}
\begin{split}
\E[f(n)] &= \E(f(\mu_n+(x-\mu_n))),\\
&=\E[f(\mu_n)+f'(\mu_n)(x-\mu_n)+\frac{1}{2}f''(\mu_n)(x-\mu_n)^2]\\
&\approx f(\mu_n)+\frac{1}{2}f''(\mu_n)\sigma_n^2\\
&=\frac{1}{\mu_n}+\frac{\sigma_n^2}{\mu_n^3},
\end{split}
\end{equation}
where $\mu_n$ and $\sigma_n^2$ are the mean and variance of Poisson random variable $N$ given in (\ref{eqn:PN}). We get the required expression by substituting $\mu_n = p\lambda_T\pi R^2$ and $\sigma_n^2 = p\lambda_T\pi R^2$. 
\end{document}